\documentclass[11pt, a4paper]{article}

\usepackage[utf8]{inputenc}
\usepackage[T1]{fontenc}
\usepackage{lmodern}
\usepackage{amsmath,amssymb,amsthm,mathtools}
\usepackage{bm}
\usepackage{booktabs}
\usepackage{graphicx}
\usepackage{microtype}
\usepackage{geometry}
\usepackage{algorithm}
\usepackage{algpseudocode}
\usepackage[dvipsnames]{xcolor}
\usepackage{natbib}
\setcitestyle{authoryear,round}
\usepackage[colorlinks=true,linkcolor=blue,citecolor=blue,urlcolor=blue]{hyperref}

\theoremstyle{plain}
\newtheorem{theorem}{Theorem}[section]
\newtheorem{proposition}[theorem]{Proposition}
\newtheorem{lemma}[theorem]{Lemma}
\theoremstyle{definition}

\newtheorem{example}[theorem]{Example}
\theoremstyle{remark}

\newcommand{\R}{\mathbb{R}}

\newcommand{\E}{\mathbb{E}}
\newcommand{\Var}{\operatorname{Var}}
\newcommand{\Cov}{\operatorname{Cov}}
\newcommand{\diag}{\operatorname{diag}}

\newcommand{\dd}{\,d}
\newcommand{\norm}[1]{\left\lVert #1 \right\rVert}
\newcommand{\abs}[1]{\left\lvert #1 \right\rvert}

\hypersetup{
    colorlinks=true,
    linkcolor=blue,
    filecolor=magenta,      
    urlcolor=cyan,
    citecolor=blue,
}

\title{Local Second-Order Geometry Induced by Deformation Maps}
\author{Maria Laura Battagliola \\ 
\small Department of Statistics, ITAM, Mexico City, Mexico}
\date{\today}

\begin{document}

\maketitle

\begin{abstract}
Spatial deformations offer a flexible route to nonstationary dependence by warping the coordinates of a stationary random field. While the exact induced covariance depends on the deformation map in its entirety, we show that its behavior in a neighborhood is approximated accurately by linearization. This produces a tangent covariance whose discrepancy from the true covariance we bound explicitly, and its Fourier transform yields a local spectrum in closed form. Building on this spectral description, we introduce a simulation scheme that generates a deformed Gaussian field in a neighborhood accounting for the local spectrum, so that the simulated field reproduces the finite dimensional tangent covariance by construction. For repeated sampling across many reference points, a truncated singular value decomposition compresses the space and frequency weights into a reusable form. We further apply the summaries based on the local Jacobian as an exploratory device for deformations estimated from images, using cardiac magnetic resonance data from the Automated Cardiac Diagnosis Challenge together with optical flow. The resulting local geometry exhibits differences across diagnostic groups through directional and anisotropic features of myocardial deformation that go beyond simple measures of local expansion or compression.
\end{abstract}

\section{Introduction}

Deformation maps provide a common representation of evolving complex objects. For instance, in medical imaging, registration estimates nonlinear transformations between images of the same patient at different times or those of different patients \citep{sotiras2013}. This, and many more examples, suggest that the deformation itself is an object of statistical interest.

In spatial statistics, deformations have played a central role in the construction of covariance models. Many frameworks operate under the  assumption of stationarity, meaning that the random field has a covariance structure that does not depend on the locations themselves, but only on the lag between them. This assumption is advantageous and widely used in practice because it leads to parsimonious covariance models, simplifies inference and prediction, and enables efficient computational tools for estimation and simulation (for a general overview, see \cite{cressie1993,stein1999}). However, when a nonlinear change of coordinates occurs, a stationary latent field generally becomes nonstationary in the original domain, since distances and directions are modified differently at different locations. This idea underlies the spatial deformation approach to nonstationary covariance modeling, where the observation domain is mapped to a latent space in which stationarity is more plausible \citep{sampson1992,damian2001,schmidt2003}. Related work has studied the identifiability and estimation of such deformations for isotropic Gaussian random fields, emphasizing the local geometric information carried by the deformation \citep{anderes2008}. 
Other work has estimated deformations from realizations by using local information in the observed field, either through small-scale quadratic variations \citep{guyon2000}, or through localized representations to recover deformation features of stationary processes \citep{clerc2003}.
Related covariance constructions allow the local dependence structure to vary across the domain, while still yielding valid nonstationary covariance models \citep{paciorek2006}. These studies motivate a local geometric analysis of nonstationary dependence.

A complementary perspective is provided by the spectral representation of spatial dependence. For stationary fields, the spectrum describes how variation is distributed over frequency, and changes in scale or direction correspond to transformations of this representation. For nonstationary processes, the idea of a spectrum that varies locally has a long history, beginning with evolutionary spectra for time series \citep{priestley1965}. Spatial spectral methods have also modeled nonstationarity through local stationarity, for instance by representing a spatial process as a kernel-weighted combination of stationary components whose spectral parameters vary across the domain \citep{fuentes2002}. Thus, nonstationarity can be studied complementarily through local frequency content. 

Simulation provides a further motivation for this perspective. Under a specified stationary covariance model, Gaussian random fields can often be simulated efficiently, especially on regular grids. This is the case, for instance, of circulant embedding \citep{wood1994, dietrich1997, graham2018}. For nonstationary fields, simulation is less straightforward because the dependence structure changes with location and the computational advantage of the stationary methods is generally lost. Existing approaches impose additional structure in different ways. One strategy is to assume approximate stationarity within moving spatial windows, estimate stationary covariance parameters locally, and then assemble these local estimates into a global model from which nonstationary fields can be simulated \citep{nychka2018}. Another strategy, proposed by \citet{fuglstad2015}, is to define the field through a stochastic partial differential equation whose parameters vary over space.
Moreover, \cite{emery2018} model anisotropy through a sum of random Fourier waves whose amplitudes vary with location according to a prescribed local spectral density.
When a global deformation map is available, \citet{kleiber2016} provides an efficient approach to nonstationary simulation by simulating a stationary Gaussian field in a latent space first, and then mapping it back to the desired warped domain.

We study how a smooth deformation changes
a random field’s local second-order structure, rather than treating the map only as a global simulation device.
A nonlinear deformation induces a valid but generally nonstationary covariance, and this exact covariance depends on the full deformation map. Locally, however, the action of the deformation is described by its Jacobian. We use this observation to derive a local covariance approximation, quantify the error introduced by the local linearization, and obtain the corresponding local spectrum. In this way, the deformation geometry provides the form of both covariance and spectrum.
Figure~\ref{fig:intro_fig} illustrates this idea. A nonlinear deformation applied to an isotropic Matérn field produces a deformed field with spatially varying local structure. Around a fixed location $s$, the exact local covariance induced by the deformation is well reproduced by the covariance obtained from the local linearization. The figure highlights the main point of the paper, namely that, even when the global deformation is complicated, its statistical effect can be understood locally through second-order quantities such as covariance and spectrum.

\begin{figure}[htbp]
    \centering
    \includegraphics[width=\linewidth]{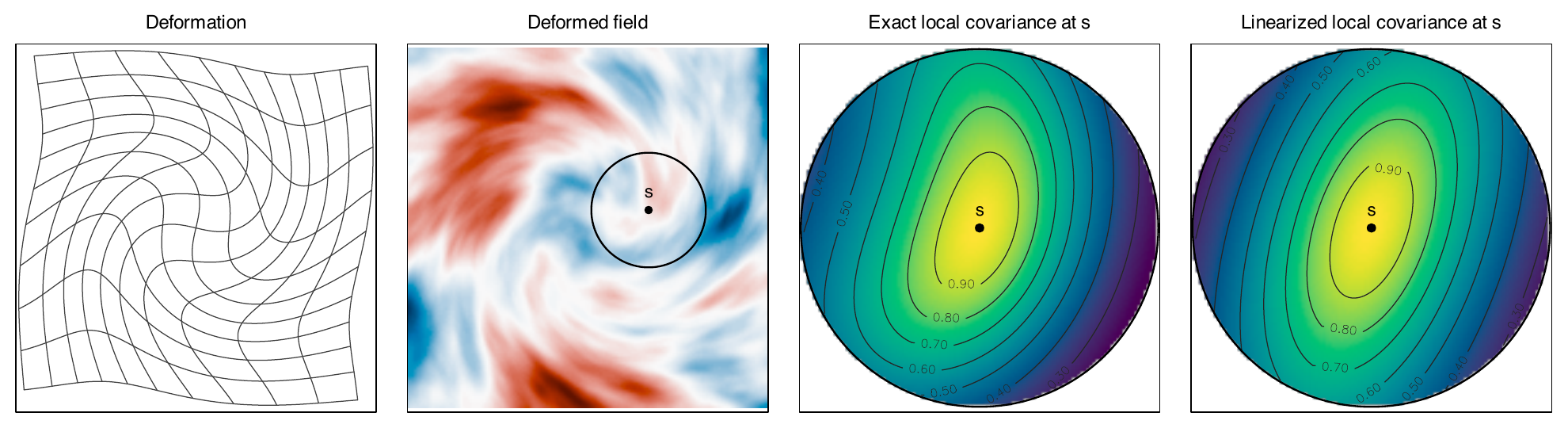}
    \caption{Illustration of deformed random field. The first panel shows the deformation, and the second panel shows such deformation applied to an isotropic Matérn random field, showing a location $s$ on the field. The third and fourth panels show the target local covariance and the approximated local covariance around $s$, respectively.}
    \label{fig:intro_fig}
\end{figure}

The same framework also offers a way to simulate deformed Gaussian fields locally. In particular, by knowing the local spectrum of the target field, we use a finite set of random Fourier components whose weights are determined by such local spectrum.
This produces a simulated local field whose covariance is exactly the finite-dimensional approximation of the linearized covariance.
When simulations are needed at many anchor locations, the space-frequency amplitude matrix can be compressed by a low-rank singular-value decomposition, yielding a compact representation that can be reused across repeated draws. The numerical results confirm that the covariance error follows the expected decomposition into finite-frequency and local-linearization components. They also show that the compression introduces only a small additional error, while the timing study identifies the regime in which the local simulator is computationally advantageous.

The same Jacobian-based summaries can be applied to estimated image deformations.
When the goal is to understand the local role of a deformation, we can begin with a direction-free geometry and let the estimated deformation determine the induced local anisotropy and orientation. This is especially relevant when the global deformation is not known a priori and must be estimated from the data. We illustrate this idea with cardiac magnetic resonance images from the Automated Cardiac Diagnosis Challenge (ACDC) data set \citep{bernard2018}, pairing the proposed local geometry with optical flow \citep{horn1981} to obtain local deformation estimates between images. The resulting summaries are then used as exploratory tools, showing that different pathology groups exhibit different localized patterns of myocardial deformation. In particular, the analysis suggests that these differences are not explained only by local expansion or compression, but also by spectral summaries of the local Jacobian in specific myocardial regions.

The rest of the paper is organized as follows. Section~\ref{sec:preliminaries} defines the deformation model and Fourier convention. Section~\ref{sec:local-properties} derives the exact deformed covariance, the local covariance error bound, and the local spectral warping formula. Section~\ref{sec:finite-frequency} develops the finite-frequency local simulator for a deformed stationary Gaussian random field. Section~\ref{sec:numerical-diagnostics} evaluates the approximation with numerical studies. Section~\ref{sec:acdc-application} applies the local geometry as an exploratory analysis for cardiac deformation, and Section~\ref{sec:discussion} discusses limitations and extensions.

\section{Preliminaries}
\label{sec:preliminaries}
Let $D\subset\R^d$ be compact, and let $X=\{X(u)\in \R :u\in\R^d\}$ be 
a zero-mean stationary random field 
with covariance
\begin{equation}
\label{eq:stat_cov_X}
     c_X(h)=\Cov (X(u),X(u+h)), \quad h\in \mathbb{R}^d.
\end{equation}
Equation~\eqref{eq:stat_cov_X} shows that the second-order properties of a stationary field only depend on lag $h=(h_1,\dots,h_d)^\top$, and it follows that $c_X(-h) = c_X(h)$. We can define the spectral density function $S_X$ as the Fourier transform of $c_X$, i.e. the following bijection holds
\begin{equation}
 \label{eq:spectral_density_X}
    \begin{aligned}
        S_X(k)&=\int_{\R^d}c_X(h)e^{-2\pi i k^\top h}\dd h, \quad  k\in \mathbb{R}^d,\\
        c_X(h)&=\int_{\R^d}S_X(k)e^{2\pi i k^\top h}\dd k, \quad  h\in \mathbb{R}^d,
    \end{aligned}
\end{equation}
where $k = (k_1,\dots,k_d)^\top\in \mathbb{R}^d$ is a frequency.
In particular, $S_X(-k) = S_X(k)$, and the variance of the field is  $c_X(0)=\int_{\R^d}S_X(k)\dd k$.

Now, let $T:D\to \R^d$ be a deterministic spatial deformation. We consider
$Y_T$, the deformation of $X$ under $T$, i.e.
\begin{equation}
    Y_T(s)=X(T(s)),\qquad s\in D.
    \label{eq:exact-deformed-field}
\end{equation}
For a fixed $s\in D$, we define the Jacobian of the transformation in $s$ as 
\begin{equation}
\label{eq:jacobian}
    J_T(s)=DT(s) = \begin{pmatrix}
\frac{\partial T_1}{\partial s_1} (s) & \dots & \frac{\partial T_1}{\partial s_d} (s)\\
\vdots & \ddots & \vdots\\
\frac{\partial T_d}{\partial s_1} (s) & \dots & \frac{\partial T_d}{\partial s_d} (s)
\end{pmatrix}.
\end{equation}
The Jacobian $J_T(s)$ describes the local first-order variation of the deformation at $s$, and the product $J_T(s)h$ is the directional derivative of $T$ at $s$ along the vector $h$, including the scale of $h$.
Moreover, consider $H_{T_r}(s)$, the Hessian matrix of the $r$th component $T_r$:
$$
H_{T_r}(s) = \begin{pmatrix}
\frac{\partial^2 T_r}{\partial s_1^2} (s) & \dots & \frac{\partial^2 T_r}{\partial s_1 \partial s_d} (s)\\
\vdots & \ddots & \vdots\\
\frac{\partial^2 T_r}{\partial s_d \partial s_1} (s) & \dots & \frac{\partial^2 T_r}{\partial s_d^2}(s)
\end{pmatrix}.
$$
Then, we consider the bilinear map $D^2T(s):\R^d\times\R^d\to\R^d$:
\begin{equation}
\label{eq:second_derivative}
    D^2T(s)(h,h) = \begin{pmatrix}
    h^\top H_{T_1}(s) h\\
    \vdots\\
    h^\top H_{T_d}(s) h
\end{pmatrix}.
\end{equation}
The term $D^2T(s)(h,h)$ measures the second-order directional variation of $T$ along $h$.
For the Taylor bounds below, we use 
\begin{equation}
    \norm{D^2T(s)}_{\operatorname{dir}}
    =
    \sup_{h\in \R^d}
    \norm{D^2T(s)\left (\frac{h}{\norm{h}},\frac{h}{\norm{h}} \right)},
    \label{eq:D2T-dir-norm}
\end{equation}
and we define the global curvature bound
\begin{equation}
    M_T
    =
    \sup_{s\in D}\norm{D^2T(s)}_{\operatorname{dir}}.
    \label{eq:global-curvature-bound}
\end{equation}
In what follows, we assume that $T\in C^2(D)$ and that $J_T(s)$ is nonsingular for every $s\in D$. Together with the compactness of $D$, this entails that $M_T$ in \eqref{eq:global-curvature-bound} is finite.

\section{Local properties of deformed stationary random fields}
\label{sec:local-properties}
Now we study the properties of the covariance and spectral density function of the deformed field $Y_T$ given $c_X$ and $S_X$ in \eqref{eq:spectral_density_X}. The exact deformation model changes the covariance by evaluating the covariance of $X$ at deformed lags. We show that the covariance of $Y_T$ remains valid for any
deterministic map $T$, but it generally loses stationarity. Then, to obtain a local spectral description, we fix a
location $s$ and use the first order approximation of the deformed lag $(T(s+h)-T(s))$.

First, we show that $Y_T$ has a valid covariance with a preliminary lemma.
\begin{lemma}[Validity of the exact deformed covariance]
\label{th:valid_cov}
Let $X$ be a zero-mean stationary random field with covariance $c_X$. Let $T:D\to\R^d$ be deterministic and define $Y_T(s)=X(T(s))$. Then, $Y_T$ is a valid zero-mean random field on $D$ with covariance
\begin{equation}
    C_T(s,t)=\Cov(Y_T(s),Y_T(t))
    =
    c_X(T(t)-T(s)).
    \label{eq:exact-deformed-covariance}
\end{equation}
Moreover, $C_T$ is positive semidefinite.
\end{lemma}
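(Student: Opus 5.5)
The proof is a direct computation followed by a reduction to the positive semidefiniteness of the stationary covariance $c_X$. There are three steps.

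\emph{Well-definedness and mean.} For each $s\in D$, $T(s)$ is a deterministic point of $\R^d$. Hence $Y_T(s)=X(T(s))$ is simply the random variable $X$ evaluated at a fixed index, and $Y_T$ is a well-defined random field indexed by $D$ (a composition of the index map with $X$). Its mean is $\E[Y_T(s)]=\E[X(T(s))]=0$, because $X$ has zero mean at every location.

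\emph{Covariance formula.} For $s,t\in D$, put $u=T(s)$ and $h=T(t)-T(s)$. Then
\begin{equation*}
\Cov(Y_T(s),Y_T(t))=\Cov\bigl(X(u),X(u+h)\bigr)=c_X(h)=c_X(T(t)-T(s)),
\end{equation*}
where the middle equality is stationarity \eqref{eq:stat_cov_X}. Since $c_X(-h)=c_X(h)$, $C_T$ is also symmetric.

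\emph{Positive semidefiniteness.} Fix $n$ points $s_1,\dots,s_n\in D$ and coefficients $a_1,\dots,a_n\in\R$. By bilinearity of covariance,
\begin{equation*}
\sum_{j,l}a_ja_l\,C_T(s_j,s_l)=\sum_{j,l}a_ja_l\Cov\bigl(Y_T(s_j),Y_T(s_l)\bigr)=\Var\Bigl(\sum_j a_jX(T(s_j))\Bigr)\ge 0.
\end{equation*}
An alternative that avoids referring to the random variables is the spectral route via \eqref{eq:spectral_density_X}. It gives the same sum as
\begin{equation*}
\int_{\R^d} S_X(k)\,\Bigl|\sum_j a_j e^{2\pi i k^\top T(s_j)}\Bigr|^2\dd k\ge 0,
\end{equation*}
using $S_X\ge 0$. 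This version shows that only positive definiteness of $c_X$ at the finitely many points $T(s_1),\dots,T(s_n)$ is needed.

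No step is a real obstacle. The only point needing care is that the lemma assumes nothing of $T$: not injectivity, not continuity, not the nonsingular Jacobian. So the argument must not use any of these. If $T(s_j)=T(s_l)$ for $j\ne l$, the Gram matrix simply has repeated rows, which is still positive semidefinite, and the variance argument above covers this case automatically.
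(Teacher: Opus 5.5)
Your proof is correct and follows the paper's argument. The covariance formula comes from stationarity, and positive semidefiniteness is inherited from $c_X$ at the image points $T(s_1),\dots,T(s_n)$; your variance-of-a-linear-combination step makes explicit what the paper only asserts, while the spectral route is an extra the paper does not use. One small point: it is the direct variance argument that shows only the finitely many points $T(s_j)$ matter, whereas the spectral version relies on the global condition $S_X\ge 0$.
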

\begin{proof}
    By stationarity of $X$, for any $s,t\in D$,
$$\Cov(Y_T(s),Y_T(t))=\Cov(X(T(s)),X(T(t)))=c_X(T(t)-T(s)).$$
    Moreover, for any finite collection $s_1,\dots,s_n\in D$, $T(s_1),\dots,T(s_n) \in \R^d$.
    Since $c_X$ is a valid
    stationary covariance on $\R^d$, the covariances
    $$C_T(s_i,s_j) =c_X(T(s_j)-T(s_i)),\qquad i,j=1,\dots,n,$$
    form a positive semidefinite covariance matrix.
    Therefore $C_T$ is a valid
    covariance function on $D$.
\end{proof}

The result above ensures that the deformation on $X$ leads to the valid covariance $C_T$ for $Y_T$. Notice that such covariance is not stationary. In particular, for any $s_1,t_1,s_2,t_2\in D$ such that $(s_1-t_1)=(s_2-t_2)$, leading to $c_X(t_1-s_1)= c_X(t_2-s_2)$, it might be that $c_X(T(t_1)-T(s_1))\ne c_X(T(t_2)-T(s_2))$. Moreover, the marginal variance of $Y$ is the same as that of $X$, i.e. $C_T(s,s) = c_X(0)$.

Our aim is to verify the local second order properties of $Y_T$ when $(T(s+h)-T(s))$ is replaced by the local linear approximation $J_T(s)h$. The following theorem shows that, under regularity conditions on $T$ and $c_X$, the covariance error is controlled by the global curvature bound \eqref{eq:global-curvature-bound} and the magnitude of lag $h$.
\begin{theorem}[Local covariance linearization]
\label{thm:local-cov-linearization}
Given the map $T:D\to\R^d$, assume $T\in C^2(D)$. Moreover consider the stationary zero-mean random field $X$, whose covariance $c_X$ is Lipschitz continuous with constant $L_c$:
$$
    \abs{c_X(u)-c_X(v)}
    \le L_c\norm{u-v},
    \qquad u,v\in\R^d.
$$
For $s\in D$, consider $D_s=\{h\in\R^d:\ s+\theta h\in D, \;\forall 0\le\theta\le1\}$, namely the set of all the lags $h\in\R^d$ such that the line segment joining $s$ and $(s+h)$
is contained in $D$. Then, for $h\in D_s$,
we have
\begin{equation}
    C_T(s,s+h)
    =
    c_X(J_T(s)h)+R_c(s,h),
    \label{eq:local-cov-linearization}
\end{equation}
where
\begin{equation}
    \abs{R_c(s,h)}
    \le
    \frac{L_c}{2}M_T\norm{h}^2.
    \label{eq:local-cov-bound}
\end{equation}
\end{theorem}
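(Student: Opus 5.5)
We start from Lemma~\ref{th:valid_cov}, which gives $C_T(s,s+h)=c_X(T(s+h)-T(s))$. The remainder is then defined by
\[
R_c(s,h)=c_X\bigl(T(s+h)-T(s)\bigr)-c_X\bigl(J_T(s)h\bigr).
\]
The Lipschitz assumption on $c_X$ immediately yields
\[
\abs{R_c(s,h)}\le L_c\,\norm{T(s+h)-T(s)-J_T(s)h}.
\]
This reduces the claim to a purely deterministic Taylor bound on the deformation: for $h\in D_s$ we must show
\[
\norm{T(s+h)-T(s)-J_T(s)h}\le \tfrac12 M_T\norm{h}^2 .
\]

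To get this bound, define $\gamma(\theta)=T(s+\theta h)$ for $\theta\in[0,1]$. This is well defined and $C^2$ because $h\in D_s$ means the whole segment lies in $D$. Differentiating gives $\gamma'(\theta)=J_T(s+\theta h)h$ and $\gamma''(\theta)=D^2T(s+\theta h)(h,h)$, the latter componentwise $h^\top H_{T_r}(s+\theta h)h$ as in \eqref{eq:second_derivative}. Taylor's formula with integral remainder, applied to the vector-valued $\gamma$, gives
\[
T(s+h)-T(s)-J_T(s)h=\int_0^1(1-\theta)\,D^2T(s+\theta h)(h,h)\dd\theta .
\]

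By bilinearity, $D^2T(x)(h,h)=\norm{h}^2D^2T(x)(h/\norm{h},h/\norm{h})$ for $h\neq0$. Hence, by \eqref{eq:D2T-dir-norm} and \eqref{eq:global-curvature-bound}, the integrand has norm at most $(1-\theta)M_T\norm{h}^2$ for every $\theta$, since $s+\theta h\in D$. Integrating $\int_0^1(1-\theta)\dd\theta=\tfrac12$ gives the desired bound, and combining with the Lipschitz step yields \eqref{eq:local-cov-bound}. The case $h=0$ is trivial because both sides vanish.

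The only delicate point is the vector-valued Taylor step. The mean value theorem fails for $\R^d$-valued maps, so we must use the integral form of the remainder, not a Lagrange form, together with the norm inequality $\norm{\int f}\le\int\norm{f}$. We also need $T$ to be $C^2$ on a neighborhood of the segment, which is needed to differentiate along it, including at boundary points of $D$; we interpret $T\in C^2(D)$ in that sense. Everything else is routine.
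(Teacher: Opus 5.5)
Your proposal is correct and follows essentially the same route as the paper: Taylor's theorem with integral remainder along the segment bounds $\norm{T(s+h)-T(s)-J_T(s)h}\le \tfrac12 M_T\norm{h}^2$, and the Lipschitz property of $c_X$ then gives the bound on $R_c$. Your additional points (the rescaling by bilinearity, the trivial case $h=0$, and the need for $C^2$ regularity up to the boundary of $D$) spell out details the paper's proof leaves implicit.
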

\begin{proof}
Consider a fixed $s \in D$. Then, taking $h \in D_s$ and applying Taylor's theorem with integral remainder
gives
\begin{equation}
\label{eq:first_order_taylor}
     T(s+h)
    =
    T(s)+J_T(s)h+R_T(s,h),
\end{equation}
where $R_T(s,h)
    =
    \int_0^1
    (1-\theta)D^2T(s+\theta h)(h,h)\dd\theta$.
Since the segment from $s$ to $(s+h)$ lies in $D$ and
$\norm{D^2T(t)}_{\operatorname{dir}}\le M_T$ for any $t\in D$,
$$
    \norm{R_T(s,h)}
    \le
    \int_0^1(1-\theta)M_T\norm{h}^2\dd\theta
    =
    \frac12 M_T\norm{h}^2,
$$
where $\norm{h}^2$ appears as a scaling constant for $M_T$, which is defined on unit directions.
Therefore
$$
    C_T(s,s+h)
    =
    c_X(T(s+h)-T(s))
    =
    c_X(J_T(s)h+R_T(s,h)).
$$
Define $R_c(s,h)
    =
    c_X(J_T(s)h+R_T(s,h))-c_X(J_T(s)h)$.
By the Lipschitz condition on $c_X$,
$$
    \abs{R_c(s,h)}
    \le
    L_c\norm{R_T(s,h)}
    \le
    \frac{L_c}{2}M_T\norm{h}^2.
$$
\end{proof}

Theorem~\ref{thm:local-cov-linearization} shows that the target covariance $C_T$ can be linearized with an error that is bounded above. In particular, it is guaranteed that for small lags around a location, this error is controlled by the supremum of the curvature of the deformation. Thus, the more nonlinear the deformation is, the less accurately it is represented by its local linear approximation. As a result, the covariance error can increase even for small values of $\norm{h}$.

Finally, we study the properties of the deformation on the local spectral density function.
\begin{theorem}[First-order spectral warping and variance preservation]
\label{thm:spectral-warping} 
Fix $s\in D$ and assume that $J_T(s)$ is nonsingular. The spectral density of the locally linearized covariance $ c_X(J_T(s)h)$ is
\begin{equation}
    S_T^{\operatorname{loc}}(k;s)
    =
    \frac{1}{\abs{\det J_T(s)}}
    S_X(J_T(s)^{-\top}k).
    \label{eq:local-spectral-density}
\end{equation}
Moreover,
$$
\int_{\R^d}S_T^{\operatorname{loc}}(k;s)\dd k
    =
    c_X(0).
$$
\end{theorem}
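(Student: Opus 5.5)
Writing $A=J_T(s)$, which is nonsingular by assumption, the plan is to compute the Fourier transform of $h\mapsto c_X(Ah)$ directly with the convention of \eqref{eq:spectral_density_X}, using a linear change of variables. Set
$$
S_T^{\operatorname{loc}}(k;s)=\int_{\R^d}c_X(Ah)e^{-2\pi i k^\top h}\dd h ,
$$
and substitute $u=Ah$, so that $h=A^{-1}u$ and $\dd h=\abs{\det A}^{-1}\dd u$. The phase becomes $k^\top A^{-1}u=(A^{-\top}k)^\top u$. The integral therefore equals $\abs{\det A}^{-1}\int c_X(u)e^{-2\pi i (A^{-\top}k)^\top u}\dd u=\abs{\det A}^{-1}S_X(A^{-\top}k)$, which is \eqref{eq:local-spectral-density}.

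Integrability must be checked for this to be legitimate. The paper implicitly assumes $c_X\in L^1(\R^d)$ so that $S_X$ exists; then $c_X(A\,\cdot)\in L^1$ as well, by the same change of variables. Hence the transform is well defined.

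To confirm that this $S_T^{\operatorname{loc}}$ really is the spectral density in the sense of the bijection \eqref{eq:spectral_density_X}, I will also verify the inverse relation. Substituting $k=A^{\top}\kappa$ gives $\dd k=\abs{\det A}\dd\kappa$, so
$$
\int S_T^{\operatorname{loc}}(k;s)e^{2\pi i k^\top h}\dd k=\int S_X(\kappa)e^{2\pi i \kappa^\top Ah}\dd\kappa=c_X(Ah).
$$
This step is exactly where nonsingularity of $A$ is needed, since the change of variables requires a bijective linear map.

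Variance preservation follows by taking $h=0$ in this inverse identity, giving $\int S_T^{\operatorname{loc}}\dd k=c_X(0)$. Alternatively, one can apply the change of variables $\kappa=A^{-\top}k$ directly to $\int \abs{\det A}^{-1}S_X(A^{-\top}k)\dd k$; the Jacobian factor $\abs{\det A}$ cancels, leaving $\int S_X=c_X(0)$.

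The only delicate point is the analytic justification: the inversion formula requires $S_X\in L^1$, or at least that $c_X$ is continuous with the inversion holding pointwise at $0$. I would state these conditions as the standing assumptions under which \eqref{eq:spectral_density_X} holds. Since $S_X\ge0$ is integrable with total mass $c_X(0)$, Tonelli justifies the substitution in the variance identity without further conditions. The rest is routine linear algebra.
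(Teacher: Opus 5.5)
Your proposal is correct and follows the paper's proof essentially step for step. The paper computes the Fourier transform of $c_X(J_T(s)h)$ via the substitution $u=J_T(s)h$, and then gets variance preservation by the direct change of variables $v=J_T(s)^{-\top}k$, which is exactly your second alternative; your check of the inverse transform and your integrability remarks are sound additions that the paper does not include, not a different route.
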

\begin{proof}
Given location $s \in D$, lag $h \in \R^d$ and deformation $T: D \to \R^d$, the locally linearized covariance around $s$ is
$c_X(J_T(s)h)$.
The corresponding local spectral density is its Fourier transform with
respect to $h$:
$$
    S_T^{\operatorname{loc}}(k;s)
    =
    \int_{\R^d}
    c_X(J_T(s)h)
    e^{-2\pi i k^\top h}
    \dd h.
$$
Using the change of
variables $u=J_T(s)h$, then $h=J_T(s)^{-1}u$ and $\dd h=\abs{\det J_T(s)}^{-1}\dd u$. Since $k^\top h=(J_T(s)^{-\top}k)^\top u$, we get
\begin{align*}
    S_T^{\operatorname{loc}}(k;s)
    =
    \abs{\det J_T(s)}^{-1}
    \int_{\R^d}
    c_X(u)
    e^{-2\pi i (J_T(s)^{-\top}k)^\top u}
    \dd u 
    =
    \frac{1}{\abs{\det J_T(s)}}
    S_X(J_T(s)^{-\top}k).
\end{align*}
Integrating over $k$ and using the change of variables
$v=J_T(s)^{-\top}k$, so that
$k=J_T(s)^\top v$ and
$\dd k=\abs{\det J_T(s)}\dd v$, gives
$$
    \int_{\R^d}S_T^{\operatorname{loc}}(k;s)\dd k
    =
    \int_{\R^d}
    \frac{1}{\abs{\det J_T(s)}}S_X(J_T(s)^{-\top}k)\dd k
    =
    \int_{\R^d}S_X(v)\dd v
    =
    c_X(0).
$$
\end{proof}
Theorem~\ref{thm:spectral-warping} gives us a local spectrum in closed form, which recovers the local variance of the deformed field $Y$. 

Now, we present the local covariance and spectrum of some notable random field examples.
\begin{example}[Affine deformation]
\label{ex:affine}
Suppose
$$
    T(s)=As+b,
$$
where $A\in\R^{d\times d}$ is nonsingular and $b\in\R^d$. Then $J_T(s)=A$ for all $s\in D$, and the exact covariance is
$$
    C_T(s,t)
    =
    c_X(A(t-s)).
$$
Thus, the deformed field remains stationary. The local spectral formula is exact in this case:
$$
    S_T^{\operatorname{loc}}(k;s)
    =
    \abs{\det A}^{-1}S_X(A^{-\top}k),
$$
which does not depend on $s$. 
\end{example}

\begin{example}[Matérn local spectrum and anisotropy]
\label{ex:Matern}
Suppose the latent covariance is Matérn,
$$
   c_X(h)
    =
    \sigma^2
    \frac{2^{1-\nu}}{\Gamma(\nu)}
    \left(
        \frac{\sqrt{2\nu}}{\rho}\norm{h}
    \right)^\nu
    K_\nu\left(
        \frac{\sqrt{2\nu}}{\rho}\norm{h}
    \right),
    \qquad
    \nu>0,\quad \rho>0.
$$
Then, the corresponding spectral density can be written as
$$
   S_X(k)
    =
    (2\pi)^d\sigma^2
    \frac{\Gamma(\nu+d/2)(2\nu)^\nu}
    {\Gamma(\nu)\pi^{d/2}\rho^{2\nu}}
    \left(
        \frac{2\nu}{\rho^2}
        +
        4\pi^2\norm{k}^2
    \right)^{-(\nu+d/2)}.
$$
Using the result of Theorem~\ref{thm:spectral-warping}, we obtain
$$
    S_T^{\operatorname{loc}}(k;s)
    =
    \frac{(2\pi)^d\sigma^2}{\abs{\det J_T(s)}}
    \frac{\Gamma(\nu+d/2)(2\nu)^\nu}
    {\Gamma(\nu)\pi^{d/2}\rho^{2\nu}}
    \left(
        \frac{2\nu}{\rho^2}
        +
        4\pi^2
        k^\top
        J_T(s)^{-1}J_T(s)^{-\top}
        k
    \right)^{-(\nu+d/2)}.
$$
Thus, the deformation turns an isotropic Matérn spectrum into a locally anisotropic Matérn spectrum. The frequency-domain anisotropy is governed by $J_T(s)^{-1}J_T(s)^{-\top}$.
The corresponding first-order local covariance is
$$
   c_X(J_T(s)h)
    =
    \sigma^2
    \frac{2^{1-\nu}}{\Gamma(\nu)}
    \left(
        \frac{\sqrt{2\nu}}{\rho}
        \sqrt{h^\top G(s)h}
    \right)^\nu
    K_\nu\left(
        \frac{\sqrt{2\nu}}{\rho}
        \sqrt{h^\top G(s)h}
    \right).
$$
where
\begin{equation}
\label{eq:def_Matern_cov}
    G(s)=J_T(s)^\top J_T(s).
\end{equation}
Therefore, $G(s)$ controls the local covariance level curves in space, while $G(s)^{-1}$ controls the level curves of the spectrum in frequency.
The local Jacobian changes the effective range and orientation of the Matérn field, but it does not change the Matérn smoothness parameter $\nu$. By Theorem~\ref{thm:spectral-warping}, the determinant factor preserves the nominal marginal variance $\sigma^2$ locally. Notice that Theorem~\ref{thm:local-cov-linearization} can be applied to the Matérn covariance when $\nu \geq 1/2$.
\end{example}

\section{Local spectral simulation of Gaussian fields}
\label{sec:finite-frequency}

The previous section describes the spectral behavior that a deformed stationary field should exhibit locally. In this section, we show that the derived local spectrum can be used to simulate Gaussian fields under known deformation $T$ in an efficient way. Notice that the derived spectrum does not define the global spectrum over the entire domain. Thus, we will establish guarantees on the error over local covariance structure.

\subsection{Finite-frequency local-spectrum field}

Consider location $s \in D$ and frequency $k \in \R^d$. As discussed previously, the local spectrum $S_T^{\operatorname{loc}}(k;s)$ describes how variance of the deformed random field is distributed across frequencies near location $s$. To simulate from this local spectral structure, we use:
\begin{equation}
\label{eq:filter}
     F_T(k,s)=\left [S_T^{\operatorname{loc}}(k;s)\right ]^{1/2}.
\end{equation}
We call this the local spectral amplitude, and it determines how strongly a random Fourier component at frequency $k$ contributes to the field at location $s$. 
Because $S_T^{\operatorname{loc}}(k;s)$ is a valid spectrum thanks to Theorem~\ref{thm:spectral-warping}, $F_T(k,s)$ is guaranteed to be real and nonnegative. 

Now, let $K\subset\mathbb R^d$ be a compact symmetric frequency window used to
approximate the full spectral domain $\mathbb R^d$. For simplicity, we take a
common cutoff $\kappa>0$ in all coordinate directions and define the symmetric
hypercube
$$
K=[-\kappa,\kappa]^d
=
\left\{
k=(k_1,\ldots,k_d)\in\mathbb R^d:
|k_r|\le \kappa,\ r=1,\ldots,d
\right\}.
$$
The construction extends directly to coordinate-specific cutoffs.
Let $K_L\subset K$ be a regular symmetric frequency grid with spacing
$\Delta k$ in each frequency coordinate, and write $\Delta_K=(\Delta k)^d$
for the frequency-cell volume. Here symmetric means that if $k\in K_L$, then
$-k\in K_L$, where $-k=(-k_1,\ldots,-k_d)$. Because the local spectrum is even in frequency, the contributions from $k$ and
$-k$ can be paired. We choose a half-grid $K_L^+\subset K_L\setminus\{0\}$ such
that $K_L =
\{0\}\,\cup\,K_L^+\,\cup\,-K_L^+$, where $-K_L^+=\{-k:k\in K_L^+\}$ denotes the reflection of $K_L^+$ through the
origin. 
Then, we use the grid $K_L^+ = \{ k_1,\dots,k_L\}$, and we tread the zero frequency separately.

Let $Z_0,U_1,V_1, \dots,U_L,V_L \stackrel{iid}{\sim} \mathcal{N}(0,1)$.
For a fixed anchor location $s\in D$, choose a local radius $\rho>0$ such that
$B(s,\rho) = \{r\in D: \; \norm{r-s} \leq \rho \}$ is the ball centered in $s$ with radius $\rho$. Equivalently, we consider lags $\mathcal H_\rho(s)
=
\{h\in\mathbb R^d:\ (s+h)\in B(s,\rho)\}$.
We define a finite-frequency local-spectrum field on
$B(s,\rho)$ by freezing the local spectral amplitude at the anchor $s$:
\begin{equation}
\label{eq:simulated_field}
    Y_L^{(s)}(r)
=
\sqrt{\Delta_K}\,F_T(0,s)Z_0
+
\sqrt{2\Delta_K}
\sum_{\ell=1}^L
F_T(k_\ell,s)
\left[
U_\ell\cos(2\pi k_\ell^\top r)
+
V_\ell\sin(2\pi k_\ell^\top r)
\right],
\quad r\in B(s,\rho).
\end{equation}
The field in \eqref{eq:simulated_field} is a finite-frequency realization
of the linearized Gaussian model at $s$.
A related continuous spectral construction is proposed by
\citet{emery2018}, where nonstationarity is specified directly through
a location-dependent spectral density rather than through a deformation
map. The field is approximated by a finite sum of cosine waves with
random frequencies and phases, weighted by the square root of the local
spectral density relative to the frequency-sampling density.

Since throughout the neighborhood $B(s,\rho)$,
the spectrum is frozen at $S_T^{\operatorname{loc}}(\cdot;s)$, $Y_L^{(s)}$ is a local field whose
second-order structure reproduces the finite-frequency approximation to the
linearized covariance at $s$. This construction is justified by the fact that its second order properties are finite dimensional representations of the locally linearized quantities discussed in Section~\ref{sec:local-properties}. This is shown in the following result.
\begin{proposition}[Local moments of the finite-frequency field]
\label{prop:finite-frequency-covariance}
Fix $s\in D$, and let $\rho>0$ be such that $B(s,\rho)\subset D$. For any
$h \in\mathcal H_\rho(s)$, the field $Y_L^{(s)}$ is zero-mean with covariance and variance
\begin{equation}
\label{eq:cov_YL}
    C_{L,s}(h)=\Cov\left(Y_L^{(s)}(s),Y_L^{(s)}(s+h)\right)
=
\Delta_K S_T^{\operatorname{loc}}(0;s)
+
2\Delta_K
\sum_{\ell=1}^L
S_T^{\operatorname{loc}}(k_\ell;s)
\cos(2\pi k_\ell^\top h),
\end{equation}
and
\begin{equation}
\label{eq:var_YL}
v_{L,s}=\Var(Y_L^{(s)}(s))
=
\Delta_K S_T^{\operatorname{loc}}(0;s)
+
2\Delta_K
\sum_{\ell=1}^L
S_T^{\operatorname{loc}}(k_\ell;s),
\end{equation}
respectively.
\end{proposition}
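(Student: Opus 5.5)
The plan is a direct second-moment computation on the finite sum \eqref{eq:simulated_field}. It uses only that $Z_0,U_1,V_1,\dots,U_L,V_L$ are i.i.d.\ standard normal and that the amplitudes $F_T(k_\ell,s)$ are deterministic once $s$ is fixed. Because the spectrum is frozen at the anchor, the coefficients do not depend on $r$; only the trigonometric factors vary over $B(s,\rho)$.

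\textbf{Mean.} The field $Y_L^{(s)}(r)$ is a finite linear combination of zero-mean Gaussians with deterministic coefficients. Taking expectations termwise gives $\E[Y_L^{(s)}(r)]=0$ for every $r\in B(s,\rho)$.

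\textbf{Covariance.} Fix $h\in\mathcal H_\rho(s)$, so that $s+h\in B(s,\rho)$ and the field is defined at both points. Expand $\E[Y_L^{(s)}(s)Y_L^{(s)}(s+h)]$ as a double sum over the components. By independence and unit variance, every cross term vanishes: $Z_0$ against any $U_\ell,V_m$, $U_\ell$ against $U_m$ for $\ell\ne m$, and $U_\ell$ against $V_m$ for all $\ell,m$. What survives is the following.
\begin{itemize}
\item From $Z_0^2$ we get $\Delta_K F_T(0,s)^2$.
\item From each $\ell$ we get
\[
2\Delta_K F_T(k_\ell,s)^2\left[\cos(2\pi k_\ell^\top s)\cos(2\pi k_\ell^\top(s+h))+\sin(2\pi k_\ell^\top s)\sin(2\pi k_\ell^\top(s+h))\right].
\]
\end{itemize}
The angle-difference identity $\cos a\cos b+\sin a\sin b=\cos(b-a)$ collapses the bracket to $\cos(2\pi k_\ell^\top h)$. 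The anchor $s$ therefore drops out, and the covariance depends only on the lag.

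Finally, substitute $F_T(k,s)^2=S_T^{\operatorname{loc}}(k;s)$ from \eqref{eq:filter}. This step is legitimate because Theorem~\ref{thm:spectral-warping} gives a nonnegative closed form for $S_T^{\operatorname{loc}}$, so the square root is real. The result is exactly \eqref{eq:cov_YL}.

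\textbf{Variance.} Set $h=0$, which belongs to $\mathcal H_\rho(s)$ since $s\in B(s,\rho)$. Then $\cos 0=1$, and \eqref{eq:var_YL} follows.

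There is no real obstacle here. The only points needing care are bookkeeping ones: checking that all cross terms vanish (independence of $U_\ell,V_\ell$ within the same frequency index is what kills the $\cos\cdot\sin$ mixed terms), and confirming that the zero frequency, treated separately with weight $\sqrt{\Delta_K}$ rather than $\sqrt{2\Delta_K}$, contributes $\Delta_K S_T^{\operatorname{loc}}(0;s)$ and not twice that.
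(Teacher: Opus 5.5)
Your proposal is correct and follows essentially the same route as the paper's proof. Both expand the second moment and use independence and unit variance of $Z_0,U_\ell,V_\ell$ to kill the cross terms, collapse the surviving trigonometric products to $\cos(2\pi k_\ell^\top h)$ via the angle-difference identity, substitute $F_T^2=S_T^{\operatorname{loc}}$, and set $h=0$ for the variance; your handling of the zero-frequency weight and the within-index $U_\ell,V_\ell$ cross terms is just spelled out more explicitly than in the paper.
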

\begin{proof}[Proof]
Since $Y_L^{(s)}(r)$ is a linear combination of zero-mean Gaussian random variables,
it is zero-mean and Gaussian. 
We now compute the covariance. In particular, for each $\ell=1,\dots,L$, write $\theta_\ell(r)=2\pi k_\ell^\top r$, for $r \in B(s,\rho)$. Then, for $h$ such that $(s+h)\in B(s,\rho)$, we have
\begin{align*}
    \Cov(Y_L^{(s)}(s),Y_L^{(s)}(s+h)) & = \E\left[
\sqrt{\Delta_K}F_T(0,s)Z_0
\sqrt{\Delta_K}F_T(0,s)Z_0
\right] \\
& + 2\Delta_K \sum_{\ell=1}^LF_T(k_\ell,s)^2
\E\left[
(U_\ell\cos\theta_\ell(s)+V_\ell\sin\theta_\ell(s))
(U_\ell\cos\theta_\ell(s+h)+V_\ell\sin\theta_\ell(s+h))
\right]\\
&=  \Delta_K F_T(0,s)^2 + 2\Delta_K \sum_{\ell=1}^LF_T(k_\ell,s)^2 \cos(\theta_\ell(s) - \theta_\ell(s + h)),
\end{align*}
using the independence of the Gaussian coefficients and trigonometric identities. Since $F_T(0,s)^2 = S^\text{loc}_T(0;s)$ and $F_T(k_\ell,s)^2 = S^\text{loc}_T(k_\ell;s)$ by definition, and $\cos(\theta_\ell(s) - \theta_\ell(s + h)) = \cos(2\pi k_\ell^\top h)$, we obtain
$$
 \Cov(Y_L^{(s)}(s),Y_L^{(s)}(s+h)) = \Delta_KS^\text{loc}_T(0;s) + 2\Delta_K\sum_{\ell=1}^LS^\text{loc}_T(k_\ell;s)\cos(2\pi k_\ell^\top h).
$$
Finally, taking $h=0$, we obtain
$$
 \Var(Y_L^{(s)}) = \Delta_KS^\text{loc}_T(0;s) + 2\Delta_K\sum_{\ell=1}^LS^\text{loc}_T(k_\ell;s).
$$
\end{proof}
Therefore, for each fixed anchor $s\in D$, the covariance of $Y_L^{(s)}$ is the discrete inverse Fourier transform of the local spectrum
$S_T^{\operatorname{loc}}(\cdot;s)$. Consequently, \eqref{eq:cov_YL} is the finite-frequency approximation to
the tangent covariance $c_X(J_T(s)h)$, and the
finite-frequency approximation error is
$$
\varepsilon_L(s,h)
=
\left|
C_{L,s}(h)-c_X(J_T(s)h)
\right|.
$$
The local error between the covariance in \eqref{eq:cov_YL} and the exact deformed covariance at $h \in \mathcal{H}_\rho(s)$ is such that
\begin{equation}
\label{eq:error_cov}
    \left|
C_{L,s}(h)-C_T(s,s+h)
\right|
\le
\varepsilon_L(s,h)
+
\frac{L_c}{2}M_T\rho^2,
\end{equation}
where we used the local linearization bound from Theorem~\ref{thm:local-cov-linearization}.

With the guarantee of properly reproducing the local covariance structure, \eqref{eq:simulated_field} can be used to simulate the target Gaussian random field around an anchor location. Such realizations help understand the statistical structure induced by the deformation, including the orientation and the range of local anisotropy, but also for inference, such as confidence surfaces for the functionals of interest.

\subsection{Implementation}
\label{sec:implementation}
In this section, we discuss practical aspects of the proposed simulation method. Algorithm~\ref{alg:sim} gives the procedure for generating one realization of the random field under the linearized covariance model in a neighborhood centered at the location of interest, with a prespecified radius.

\begin{algorithm}[htbp]
\caption{Local finite-frequency tangent simulation at anchor $s$}
\begin{algorithmic}
\Require  $S_X$, $T$, $s\in D$, $\rho>0$, $\{r_1,\ldots,r_Q \}\in B(s,\rho)$,
 $K_L^+=\{k_1,\ldots,k_L\}$, $\Delta_K$.
\Ensure A realization $Y_1,\ldots,Y_Q$, where $Y_q$ approximates $Y_L^{(s)}(r_q)$.
\State Compute $J \gets J_T(s)$ and $d_J \gets |\det J|$.
\State Set
$
F_0 \gets \left [d_J^{-1}S_X(0)\right]^{1/2}.
$

\For{$\ell=1,\ldots,L$}
    \State Set $q_\ell \gets J^{-\top}k_\ell$.
    \State Set
    $
    F_\ell \gets \left [d_J^{-1}S_X(q_\ell)\right]^{1/2}.
    $
\EndFor

\State Draw 
$
Z_0,U_1,V_1,\ldots,U_L,V_L
\stackrel{\mathrm{iid}}{\sim} \mathcal N(0,1).
$

\For{$q=1,\ldots,Q$}
    \State Initialize
    $
    Y_q \gets \sqrt{\Delta_K}\,F_0Z_0.
    $
    \For{$\ell=1,\ldots,L$}
        \State Update
        $
        Y_q
        \gets
        Y_q
        +
        \sqrt{2\Delta_K}\,F_\ell
        \left[
        U_\ell\cos(2\pi k_\ell^\top r_q)
        +
        V_\ell\sin(2\pi k_\ell^\top r_q)
        \right].
        $
    \EndFor
\EndFor

\State \Return $Y_1,\ldots,Y_Q$.
\end{algorithmic}
\label{alg:sim}
\end{algorithm}

For a location $s\in D$, the marginal variance of the finite-frequency tangent field is $v_{L,s}$ in \eqref{eq:var_YL}. In the continuous-frequency model, this quantity corresponds to $c_X(0)$. In
practice, frequency truncation and discretization may introduce an error in the variance. However, since the covariance of the latent stationary field $X$ is available, we can enforce the true marginal variance using
$$
\widetilde{Y}_{L}^{(s)}(r)
=
\left [\frac{c_X(0)}{v_{L,s}}\right]^{1/2}Y_L^{(s)}(r),
\quad r\in B(s,\rho).
$$
Notice that this correction fixes the local marginal variance at every point in the neighborhood, but it also
rescales the off-diagonal covariances by the factor $c_X(0)/v_{L,s}$. Therefore it
is most appropriate when $v_{L,s}$ is already close to $c_X(0)$, so that the
correction is small.

Moreover, repeating Algorithm~\ref{alg:sim} gives independent
realizations of the finite-frequency local field at $s\in D$.
When simulations are needed at several anchor locations
$\{s_1,\ldots,s_N\}$, the local amplitudes can be collected in the matrix $\mathcal F$, where $[\mathcal F]_{i\ell}=F_T(k_\ell,s_i)$ for $i=1,\ldots,N$ and $\ell=1,\ldots,L$.
Then, a truncated singular
value decomposition (SVD) of rank $M$, i.e. $\mathcal F
\approx
\mathcal F_M
=
U_M\Sigma_MV_M^\top$, provides a compact representation of the filter, where $\Sigma_M = \diag(\sigma_1, \dots, \sigma_M)$ is the diagonal matrix whose entries are the ordered singular values. Entrywise, 
$$F_T(k_\ell,s_i)
\approx
F_M(k_\ell,s_i)
=
\sum_{m=1}^M
\sigma_m [U]_{im}[V]_{\ell m}.$$
For each component $m$, the factor $[V]_{\ell m}$ gives the value of the corresponding frequency mode at $k_\ell$, the factor $[U]_{im}$ gives the spatial loading at anchor location $s_i$, and the singular value $\sigma_m$ gives the overall weight of that component.
This step reduces storage from $O(NL)$ to $O(M(N+L))$, and is convenient when
$M\ll\min(N,L)$. 
This provides a reusable representation for repeated local
simulation across many anchors.
Specifically, once the low-rank representation has been computed, additional simulations are
obtained by keeping $U_M$, $\Sigma_M$, and $V_M$ fixed and drawing new
Gaussian coefficients in Algorithm~\ref{alg:sim}. Thus, for
anchor $s_i$, the algorithm is applied with $F_M(k_\ell,s_i)$ in place of
$F_T(k_\ell,s_i)$. The covariance induced by the compressed simulator at $h \in \mathcal{H}_\rho(s)$ is therefore
\begin{equation}
\label{eq:SVD_cov}
 C_{L,M,s_i}(h)
=
\Delta_K F_T(0,s_i)^2
+
2\Delta_K
\sum_{\ell=1}^L
F_M(k_\ell,s_i)^2
\cos(2\pi k_\ell^\top h).
\end{equation}

Finally, a locally tuned equispaced Fourier grid can be used to
construct $ K_L^+$. Two quantities must be selected: the
frequency spacing $\Delta_K$, which determines the period of the
Fourier approximation and hence controls aliasing, and the univariate frequency
cutoff $\kappa = L \,\Delta k$, which controls spectral truncation. \citet{barnett2024} propose first choosing the largest frequency
spacing compatible with a prescribed aliasing tolerance, and then
increasing the grid extent until the truncation error due to the
omitted spectral tail is of approximately the same magnitude as, or
smaller than, the aliasing error. Consequently, the number of frequencies varies
with the local scale and anisotropy at each location. When a common
grid is required for several anchor locations, we propose using the smallest
local spacing and the largest local cutoff.

\section{Numerical diagnostics}
\label{sec:numerical-diagnostics}

In this section we use numerical results to assess the finite-dimensional covariance error, the covariance error introduced by the low-rank SVD approximation, as well as the speed up gained.

\subsection{Simulation cases}
\label{subsec:simulation-cases}

We consider $D=[0,1]^2$ and the regular grid
$D_n=\{(i/(n-1),j/(n-1)):i,j=0,\ldots,n-1\}$ with $n=128$. The latent process $X$ is a zero-mean stationary Gaussian field, and the deformed process is $Y_T(s)=X(T(s))$ for $s\in D$, with known $T:D \to \R^2$.
For covariance diagnostics, the target is the exact deformed covariance, i.e. $C_T(s,t)=c_X(T(t)-T(s))$. For $c_X$ we use the Matérn model 
\begin{equation}
\label{eq:cX_model_sim}
    c_X(h) = \sigma^2
\frac{2^{1-\nu}}{\Gamma(\nu)}
\left(
\sqrt{2\nu}\,\|L^{-1}h\|
\right)^\nu
K_\nu\left(
\sqrt{2\nu}\,\|L^{-1}h\|
\right),
\end{equation}
with $\sigma^2=1$. Moreover, for $L = \begin{pmatrix}
\rho & 0\\
0 & \rho
\end{pmatrix}$, with $\rho=0.15$, we obtain the isotropic model, and for $L
=
R_\alpha
\begin{pmatrix}
\rho_1 & 0\\
0 & \rho_2
\end{pmatrix},$ with $\rho_1=0.30$ and $\rho_2=0.08$, and where $R_\alpha = \begin{pmatrix}
    \cos(\alpha) & -\sin(\alpha)\\
    \sin(\alpha) & \cos(\alpha)
\end{pmatrix}$ is a rotation matrix with orientation $\alpha=\pi/4$. We will consider both model specifications. 

The corresponding spectrum to \eqref{eq:cX_model_sim} is
$$
    S_X(k)=
|\det L|\,4\pi\sigma^2\nu(2\nu)^\nu
\left(2\nu+4\pi^2\norm{L^\top k}^2\right)^{-(\nu+1)}.
$$
In the numerical
results reported below, we check for different smoothness values, namely $\nu\in\{0.5,1,1.5\}$.

The same four deformation maps are used for both Matérn baselines.
The affine stretch is
\[
T_{\mathrm{aff}}(s)
=
c + A(s-c),
\qquad
A =
R_\theta
\begin{pmatrix}
a_x & 0\\
0 & a_y
\end{pmatrix},
\]
with $c=(1/2,1/2)^\top$, $a_x=1.25$, $a_y=0.80$, and $\theta=0$.
The transport shear is
\[
T_{\mathrm{sh}}(s_1,s_2)
=
\left(
s_1-\tau A\sin(2\pi s_2),
s_2
\right),
\]
with $\tau=1$ and $A=0.18$. The radial lens is
\[
T_{\mathrm{lens}}(s)
=
c+q(s)(s-c),
\qquad
q(s)
=
1+A\exp\left(-\frac{\|s-c\|^2}{r^2}\right),
\]
with $c=(1/2,1/2)^\top$, $A=0.75$, and $r=0.30$. Finally, the vortex is
\[
T_{\mathrm{vor}}(s)
=
c+R_{\theta(s)}(s-c),
\qquad
\theta(s)
=
\omega
\exp\left (-\frac{\|s-c\|^2}{r^2}\right),
\]
with $c=(1/2,1/2)^\top$, $\omega=1.8$, and $r=0.35$.

For the affine case the Jacobian is constant, as shown in Example~\ref{ex:affine}, and the local
linearization is exact. The other examples are more challenging non-linear transformations. The shear and vortex are area-preserving since $\det J_T(s)=1$ for any $s\in D$, while the lens changes local area.

\subsection{Covariance and compression errors}
\label{subsec:covariance-errors}

The first diagnostic separates two sources of covariance error. For the analysis, we consider anchor locations
$$
\mathcal A
= \left\{ \left(\frac{j-1/2}{n_A},\frac{k-1/2}{n_A} \right) \in D, \;  j,k=1,\ldots,n_A \right\},
$$
with $n_A=8$, with $|\mathcal{A}|= 64$.
For each anchor $s_0 \in \mathcal A$ and for each
radius $\rho\in\{0.04,0.08,0.12,0.16,0.20\}$, we $h \in B(s_i, \rho)$ and compute
\begin{align*}
     e_{\mathrm{freq},\rho}(s_0,h) & =|C_{L,s_i}(h)-c_X(J_T(s_0)h)|,\\
      e_{\mathrm{lin},\rho}(s_0,h)&=|c_X(J_T(s_0)h)-C_T(s_0,s_0+h)|.
\end{align*}
In particular, for fixed $s_0 \in \mathcal{A}$ and $\rho >0$, $e_{\mathrm{freq},\rho}(s_0,h)$ measures the contribution to the error made in the finite frequency representation, while $e_{\mathrm{lin},\rho}(s_0,h)$ measures the linearization error. They both play a role in the global error between the target covariance and that of the simulated local field for $h\in \mathcal{H}_\rho(s_0)$:
$$
 |C_{L,s_0}(h)-C_T(s_0,s_0+h)| \leq  e_{\mathrm{freq},\rho}(s_0,h) + e_{\mathrm{lin},\rho}(s_0,h).
$$
This is the numerical counterpart to bound \eqref{eq:error_cov}. We know that $e_{\mathrm{freq},\rho}(s_0,h)$ depends on the discratization of the grid, while $e_{\mathrm{lin},\rho}(s_0,h)$ depends quadratically on $\rho$.

For each error type, we report the mean absolute error (MAE) over the anchor points in $\mathcal{A}$, i.e.
$$
\text{MAE}_{\text{type}}(\rho) = \frac{1}{|\mathcal{A}|} \sum_{s_0 \in \mathcal A} \frac{1}{|\mathcal{H}_\rho(s_0)|} \sum_{h\in \mathcal{H}_\rho(s_0)}e_{\text{type},\rho}(s_0,h), \quad \text{type} \in \{\text{freq}, \text{lin} \}.
$$

The results are shown in Figure~\ref{fig:cov_error_rho}. First, the affine linearization error is at numerical zero in all cases, as expected, and is not reported in the graph for readability. For the nonlinear deformations, the linearization component increases with $\rho$, while the finite-frequency component decreases in the reported diagnostics. This behavior is consistent with the theoretical linearization bound, which is of order $\rho^2$, whereas the finite-frequency component reflects the accuracy of the spectral truncation and discretization used in the approximation. Moreover, for all nonlinear transformations and for both the isotropic and anisotropic cases, the finite-frequency error decreases as the latent field becomes smoother. This is because larger values of $\nu$ correspond to less high-frequency spectral mass, so the finite-frequency representation reproduces the local covariance more accurately than in rougher cases. A similar effect is observed in the anisotropic setting, where $e_{\mathrm{freq},\rho}$ is generally smaller than in the isotropic counterpart. On the other hand, the linearization error is generally smaller in the isotropic cases than in the anisotropic ones. This is expected because anisotropic covariances are more sensitive to the direction of the deformed lag, making the nonlinear deformation harder to capture with a first-order local linearization.

\begin{figure}[htbp]
    \centering
    \includegraphics[width=\linewidth]{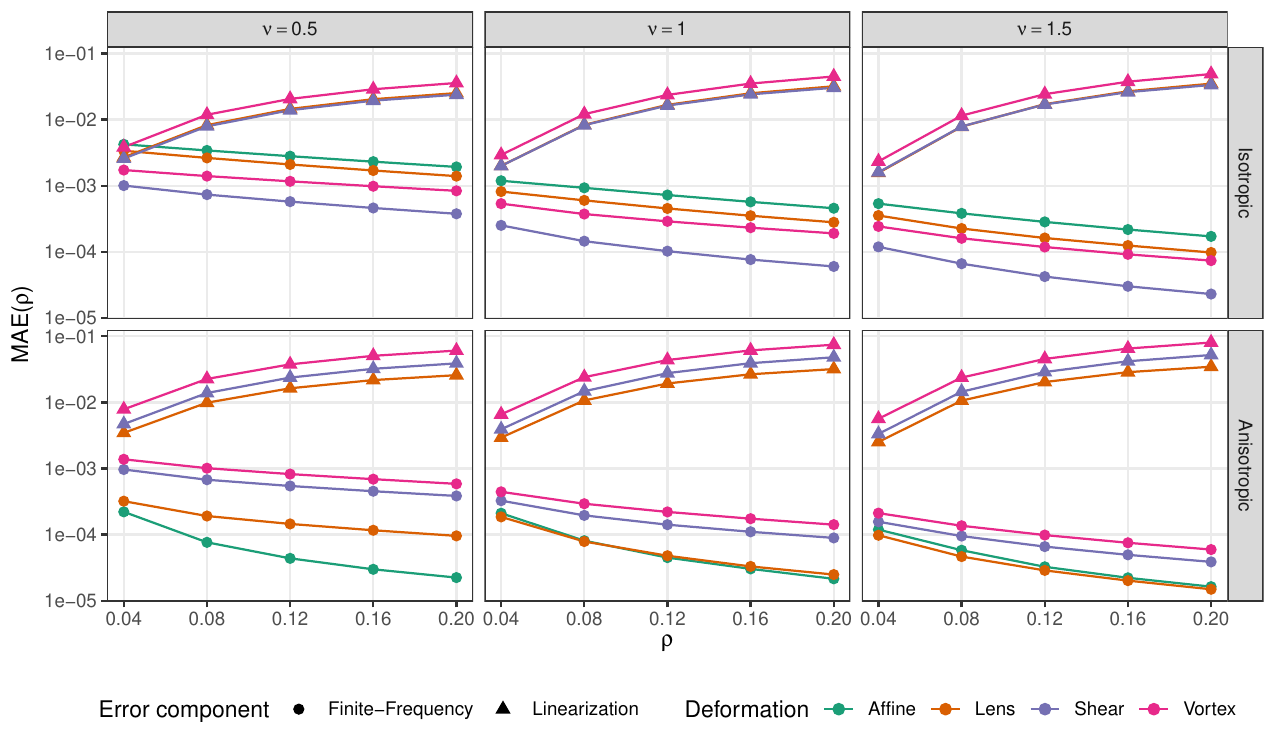}
    \caption{$\text{MAE}_{\text{freq}}(\rho)$ (circles) and $\text{MAE}_{\text{lin}}(\rho)$ (triangles) values for the isotropic Matérn cases (first row) and the anisotropic Matérn cases (second row) for varying values of smoothness (columns). Different colors denote different deformations.}
    \label{fig:cov_error_rho}
\end{figure}

The second diagnostic isolates the additional error due to the low-rank
compression of the local spectral amplitude matrix discussed in Section~\ref{sec:implementation}. For a fixed $\rho=0.12$, let $[\mathcal F]_{i\ell}=F_T(k_\ell,s_i)$, with $\ell=1,\dots,L$, and $s_i \in \mathcal{A}$. We compute the truncated SVD decomposition of $\mathcal F$ using the first $M$ components such that $M = \inf_M \left \{m: \frac{\sum_{m=1}^M \sigma_m^2}{\sum_{m} \sigma_m^2} \geq 0.99 \right \}$.
The zero frequency is stored separately. For each $s_0\in \mathcal{A}$, we compute the relative compression errors
$$
 e_{\mathrm{SVD}}^{\mathrm{rel}}(s_0)
    =
    \frac{
        \displaystyle
        \sum_{h\in\mathcal H_\rho(s_0)}
        \left|
        C_{L,M,s_0}(h)-C_{L,s_0}(h)
        \right|
    }{
        \displaystyle
        \sum_{h\in\mathcal H_\rho(s_0)}
        \left|C_{L,s_0}(h)\right|
    }.
$$
and in Table~\ref{tab:svd-compression} we report:
$$\operatorname{MAE}_{\mathrm{SVD}}^{\mathrm{rel}}
    =
    \frac{1}{|\mathcal A|}
    \sum_{s_0\in\mathcal A}
    e_{\mathrm{SVD}}^{\mathrm{rel}}(s_0).
$$
for each scenario. We use a relative error for this diagnostic because the goal is to measure the additional perturbation introduced by the SVD compression relative to the uncompressed finite-frequency covariance.

The results show that the low-rank compression introduces only a
modest additional covariance error. Across all nonlinear cases, the
displayed average relative errors do not exceed approximately $7\%$,
and the retained ranks remain small. The affine cases have rank one
and essentially zero compression error because their Jacobians do not vary
with the anchor location.
Among the nonlinear deformations, the lens and shear cases require
only two to four components, whereas the vortex requires rank five
for the isotropic baseline and ranks eight or nine for the
anisotropic baseline. The larger ranks in the latter cases are
consistent with the more complex spatial variation produced by the
interaction between the vortex deformation and the anisotropic
latent spectrum. Nevertheless, in these highest-rank cases, the
average relative compression error remains between approximately
$1.9\%$ and $2.1\%$. The largest displayed error, $7.0\%$, occurs for
the anisotropic lens case with $\nu=0.5$, thus for the area non-preserving deformation in the least smooth case.

\begin{table}[htbp]
\centering
\small
\caption{$\text{MAE}^{\text{rel}}_{\mathrm{SVD}}$ for each scenario. The selected rank $M$ is shown in
parentheses. }
\label{tab:svd-compression}
\begin{tabular}{llccc}
\toprule
Baseline & Deformation & $\nu=0.5$ & $\nu=1$ & $\nu=1.5$\\
\midrule
Isotropic & Affine & $<10^{-12}$ (1) & $<10^{-12}$ (1) & $<10^{-12}$ (1)\\
Isotropic & Lens   & 0.057 (3) & 0.056 (3) & 0.004 (4)\\
Isotropic & Shear  & 0.028 (2) & 0.027 (2) & 0.026 (2)\\
Isotropic & Vortex & 0.010 (5) & 0.010 (5) & 0.010 (5)\\
\midrule
Anisotropic & Affine & $<10^{-12}$ (1) & $<10^{-12}$ (1) & $<10^{-12}$ (1)\\
Anisotropic & Lens   & 0.070 (3) & 0.014 (4) & 0.014 (4)\\
Anisotropic & Shear  & 0.015 (3) & $<10^{-12}$ (4) & $<10^{-12}$ (4)\\
Anisotropic & Vortex & 0.021 (8) & 0.019 (9) & 0.019 (9)\\
\bottomrule
\end{tabular}
\end{table}

\subsection{Timing comparison}
\label{subsec:timing}
Now that we have established that the covariance errors are controlled even by compressing the spectral amplitude matrix, we investigate how efficient the proposed simulation is.
We compare two ways of producing repeated local outputs for
the isotropic Matérn baseline at $n=128$ and $\rho=0.12$.
The first method simulates a full stationary Matérn field by circulant
embedding, applies the deformation by interpolation, and extracts the requested
local neighborhoods. This is similar to the global deformation idea of \cite{kleiber2016}.
The second method precomputes the SVD-compressed local
spectral engine and then simulates only the requested neighborhoods. Both
methods reuse their setup across $100$ repeated draws, and the table reports
the total time of simulation. The global
method has almost constant draw cost as the number of requested anchors changes,
whereas the local method scales with the number of anchors and the nonlinearity of the deformation. We report the speedup value of the proposed local method compared to the global one. Values in parentheses larger than one indicate that the SVD local method is faster than global warping, and vise versa, values smaller than one indicate that global warping is faster.

We show the timing results in
Table~\ref{tab:timing-iso-affine-vortex}. The local simulator is most
effective when only a small number of anchor neighborhoods is required
and the Matérn field is sufficiently smooth. For the affine
deformation, the local method is faster than global warping for all
anchor counts when $\nu=1$ or $\nu=1.5$, whereas for $\nu=0.5$ it is
faster only with four anchors. For the vortex deformation, it is faster
with four anchors when $\nu=1$ or $\nu=1.5$, and retains a modest
advantage with eight anchors when $\nu=1.5$. For rougher fields or
larger collections of anchor neighborhoods, global simulation can be
faster.

\begin{table}[htbp]
\centering
\small
\caption{Timing results in seconds for the simulation of $100$ local random fields. We report in parenthesis the speedup factor compared to glabal deformation.}
\label{tab:timing-iso-affine-vortex}
\begin{tabular}{llccc}
\toprule
Deformation & $|\mathcal{A}|$
& $\nu=0.5$ & $\nu=1$ & $\nu=1.5$\\
\midrule
Affine & 4
& 1.99 $(\times 1.52)$
& 0.38 $(\times 8.08)$
& 0.14 $(\times 20.38)$\\
Affine & 8
& 3.93 $(\times 0.71)$
& 0.74 $(\times 4.19)$
& 0.28 $(\times 10.18)$\\
Affine & 16
& 8.59 $(\times 0.34)$
& 1.69 $(\times 1.85)$
& 0.62 $(\times 4.50)$\\
\midrule
Vortex & 4
& 2.45 $(\times 0.91)$
& 0.49 $(\times 4.80)$
& 0.19 $(\times 11.56)$\\
Vortex & 8
& 298.62 $(\times 0.01)$
& 4.04 $(\times 0.57)$
& 1.91 $(\times 1.22)$\\
Vortex & 16
& 695.78 $(\times 0.003)$
& 9.38 $(\times 0.25)$
& 4.46 $(\times 0.50)$\\
\bottomrule
\end{tabular}
\end{table}

The comparison with global warping should be interpreted as a benchmark against a setting in which the full deformation map $T$ is available. When the global deformation map is not available, but local Jacobian information or local spectral anisotropy can be specified or estimated, the proposed method remains a viable simulation strategy, whereas direct global warping cannot be applied.

\section{Optical-flow-informed local geometry in cardiac deformations}
\label{sec:acdc-application}

The application illustrates the value of exploring local second-order geometry in a nontrivial
setting. We focus on the data from the Automated Cardiac Diagnosis Challenge (ACDC) \citep{bernard2018}. The dataset presents sequences of cardiac magnetic resonance images for different patients. The objective of \cite{bernard2018} was to classify patients by diagnosis using image analysis. As they report, despite some patients belonging to different groups, the images look similar, and with clinical summaries like ventricular volumes, ejection fractions, myocardial mass, and wall thickness, it was difficult to tell them apart. We propose to estimate local deformation geometry on specific anatomical areas as an exploratory analysis rather than classification.

\subsection{Data description and preprocessing}
\label{subsec:acdc-data-preprocessing}

We use the ACDC training set, consisting of $n=100$ patients in five groups of twenty: normal subjects (NOR), myocardial infarction with altered left-ventricular ejection fraction (MINF), dilated cardiomyopathy (DCM), hypertrophic cardiomyopathy (HCM), and abnormal right ventricle (ARV). For each patient, the end diastole (ED) and end systole (ES) three-dimensional image volumes, and their pixel labels are available, denoting the right-ventricular cavity, left-ventricular myocardium, and left-ventricular cavity.

For each patient, we select two images, one representing ED and one ES. Specifically, first the two-dimensional mid-ventricular ES is selected such that the slice contains all three anatomical labels and is closest to the center of the stack. The same slice index is used to select the image in the ED tensor. Thus, we obtain images $\text{Im}_{\text{ES}}$ and $\text{Im}_{\text{ED}}$, which are cropped using a bounding box obtained from the union of the ED and ES segmentation masks, then resized and normalized. Notice that the images take brightness scalar values.

In order for the second order geometry to be analyzed, we need a deformation map $T$. We estimate that via optical flow \citep{horn1981}. Optical flow measures the displacement of a scalar quantity on local windows, allowing to estimate the velocity direction and magnitude for said window. We use it to estimate a dense optical-flow field $u(s)=(u_1(s),u_2(s))$ from $\text{Im}_{\text{ED}}$ to $\text{Im}_{\text{ES}}$. Under the assumption that $\operatorname{Im}_{\rm ES}(s)
    \approx
    \operatorname{Im}_{\rm ED}(s-u(s))$, then 
\begin{equation}
\label{eq:of}
    T(s)=s-u(s),
    \qquad
    J_T(s)=I-\nabla u(s),
\end{equation}
where derivatives are computed by finite differences after smoothing the displacement field. 
Thus, $T(s)$ maps a location $s$ in the ES image to its corresponding
location in the ED image, and we treat each image pixel as a spatial location. 
The derivatives in $J_T(s)$ are computed by
finite differences after smoothing the estimated displacement field.

Our objective is to measure the deformation from ED to ES in some relevant locations. We focus on the deformation of portions of the myocardium. The ACDC labels define two anatomy-relative myocardial regions. Let $c_{\mathrm{LV}}$ and $c_{\mathrm{RV}}$ be the centroids of the left ventricle (LV) and right ventricle (RV) cavities on $\text{Im}_{\text{ES}}$. The vector $(c_{\mathrm{RV}}-c_{\mathrm{LV}})$ points from the LV cavity toward the RV cavity, and gives a reproducible reference direction for the RV-facing, or septal, side of the myocardium. Among pixels labeled as myocardium, the septal region is the angular sector within $60^\circ$ of this LV-to-RV direction. The lateral  region is the opposite myocardial sector, centered $180^\circ$ away around the $c_{\mathrm{LV}}$. Figure~\ref{fig:acdc-pipeline} gives a visual illustration of the myocardial regions of interest and the estimated optical flow.

\begin{figure}[htbp]
    \centering
    \includegraphics[width=\linewidth]{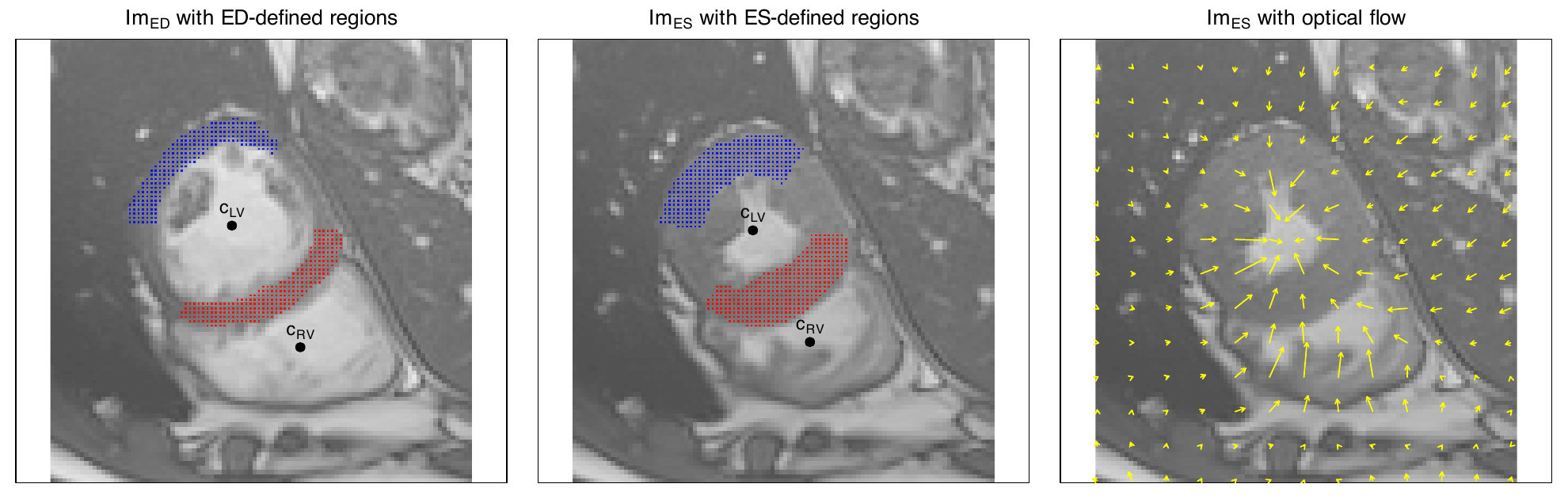}
    \caption{Illustration of the ACDC preprocessing for one representative patient. The first and second plots shows the selected ED and ES slices, respectively, with LV and RV centroids, and the septal (red) and lateral (blue) myocardial sectors. The third plot shows the estimated ED-to-ES optical-flow field (yellow).}
\label{fig:acdc-pipeline}
\end{figure}

For both sets of pixels in the septal and lateral regions, we use an isotropic Mat\'ern covariance as a neutral pre-deformation benchmark with no preferred orientation, so the anisotropy observed after deformation is attributable  to the Jacobian $J_T(s)$ only. Example~\ref{ex:Matern} shows that this Jacobian determines the local covariance through $G(s)=J_T(s)^\top J_T(s)$ in \eqref{eq:def_Matern_cov}. We ask whether deformation-derived summaries of local contraction geometry differ across the ACDC groups.

In particular, at each myocardial pixel $s$ in either region, we compute three summaries from $J_T(s)$. The first is $\log|\det J_T(s)|$, which measures local area change. This is a measure of purely compression and dilation, and does not give us any information on the direction of deformation. The second uses the eigenvalues $\lambda_1(s)\ge \lambda_2(s)>0$ of $G(s)$:
$$
    \log\eta(s)=\frac12\log\left(\frac{\lambda_1(s)}{\lambda_2(s)}\right).
$$
This measures the magnitude of local anisotropy, with $\log\eta(s)=0$ for a  spherical geometry. The third summary is the orientation of the principal local direction relative to the local radial direction. If $v_1(s)$ is the eigenvector associated with $\lambda_1(s)$ and $e_{\perp}(s)=\frac{s-c_{\mathrm{LV}}}{\|s-c_{\mathrm{LV}}\|}$
is the direction between $s$ and the centroid of the LV, then
$$
    \theta(s)=\cos^{-1}\left(|v_1(s)^\top e_{\perp}(s)|\right)\in[0^\circ,90^\circ],
$$
tells us if the main axis of the ellipses is aligned towards the center of the LV cavity. Specifically, values near $0^\circ$ indicate alignment with $e_{\perp}(s)$, the direction perpendicular to the myocardium, and values near $90^\circ$ indicate alignment tangential to the myocardium.

For each patient, we estimate the local Jacobian of the pixels in the septal and lateral regions. For each location of the two regions, we compute the three quantities of interest. Finally, we use the median of each one of these for the following analysis. Therefore, for both septal and lateral regions, we analyze $n=100$ representatives of each of the three features.

\subsection{Analysis and results}
\label{subsec:acdc-analysis-results}

For each of the six feature--region pair, the Kruskal--Wallis test \citep{kruskal1952} compares the rank distributions of the patient-level medians across the five ACDC groups. 
The Benjamini--Hochberg procedure \citep{benjamini1995} controls the false discovery rate across the feature--region tests, and gives the adjusted $p$-values, that we call the $q$-values. A significant test indicates group differences, but does not identify pairwise contrasts.

Table~\ref{tab:acdc-kw-tests} shows that the strongest group differences are directional. The largest effects are septal anisotropy, septal orientation, and lateral anisotropy. The local area-change summaries are also significant, but their $q$-values are several orders of magnitude larger. Thus, the main separation among groups is not only how much local area changes from ED to ES, but how directional the local deformation is and how this direction is aligned with cardiac anatomy.

\begin{table}[htbp]
\centering
\caption{Kruskal--Wallis tests results across ACDC groups using patient-level regional medians. We report the quantity tested (first column), the region (second column), the test statistic $H$ (third column), the $p$-value (fourth column) and the $q$-value (fifth column).}
\label{tab:acdc-kw-tests}
\begin{tabular}{llrrrr}
\toprule
Feature & Region & $H$ & $p$ & $q$  \\
\midrule
$\log\eta$ & Septum & 45.48 & 3.2$\times 10^{-9}$ & 1.9$\times 10^{-8}$  \\
$\theta_{\mathrm{rad}}$ & Septum & 42.99 & 1.0$\times 10^{-8}$ & 3.1$\times 10^{-8}$\\
$\log\eta$ & Lateral & 38.28 & 9.8$\times 10^{-8}$ & 2.0$\times 10^{-7}$ \\
$\log|\det J_T|$ & Septum & 17.45 & 2$\times 10^{-3}$ & 2$\times 10^{-3}$  \\
$\theta_{\mathrm{rad}}$ & Lateral & 14.43 & 6$\times 10^{-3}$ & 7$\times 10^{-3}$ \\
$\log|\det J_T|$ & Lateral & 10.41 & 3.4$\times 10^{-2}$ & 3.4$\times 10^{-2}$  \\
\bottomrule
\end{tabular}
\end{table}

Now that we have confirmed that there is a significant difference between diagnostic groups at the median level, we want to investigate how the groups differ. Figure~\ref{fig:acdc-top3-boxplots} shows the empirical distributions across patients of the top three quantities that the tests signalled as different across groups. We see that the strongest separation is not only between normal and pathological subjects, but among groups affected from different pathologies. Specifically, NOR, HCM and ARV show comparable empirical distributions, supporting the fact that none of these features alone is disease specific, and interpretation should remain descriptive.
In the septum, DCM has lower values of $\log\eta$ and values of $\theta$ that are centered around the lowest median and with high variability, suggesting a less directionally organized ED-to-ES deformation and a principal direction that is less consistently tangent to the myocardium. In the lateral wall, NOR and ARV have larger $\log\eta$ values, while MINF and DCM have lower values, suggesting weaker directional deformation in those groups. This complements the discussion in \cite{bernard2018}, where MINF and DCM are described as visually similar but different in local myocardial contraction behavior. Here, that distinction is expressed through local deformation geometry derived from the ED-to-ES motion field. 

\begin{figure}[htbp]
\centering
\includegraphics[width=\linewidth]{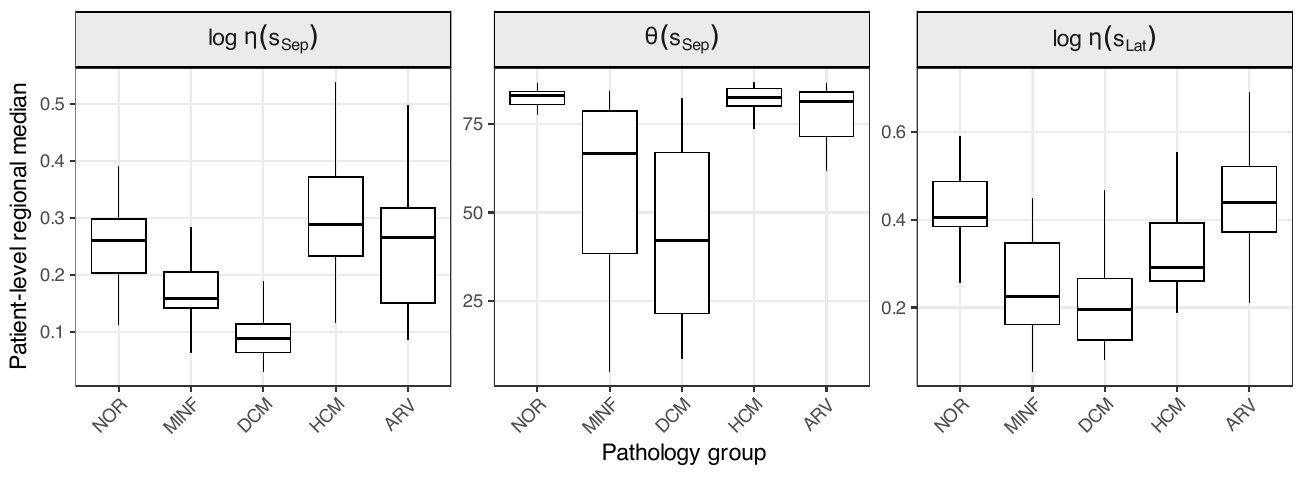}
\caption{Patient-level regional medians for the three strongest group differences in local deformation geometry: $\log\eta(s_{\mathrm{Sep}})$ (first panel), $\theta(s_{\mathrm{Sep}})$ (second panel), and $\log\eta(s_{\mathrm{Lat}})$ (third panel).}
\label{fig:acdc-top3-boxplots}
\end{figure}

\section{Discussion}
\label{sec:discussion}

This paper developed a local second-order analysis of spatial deformations. Starting from a stationary latent field deformed through a deterministic map, we studied how the deformation changes the covariance and how this change can be approximated locally through the Jacobian. The resulting covariance approximation comes with an explicit error bound, which separates the role of the covariance regularity from the curvature of the deformation. We then used the linearized covariance to derive a local spectral density, showing how the deformation changes local scale, orientation, and anisotropy while preserving the marginal variance. These results provide a link between deformation geometry, encoded by the Jacobian at a location, and its statistical expression via the local covariance and spectrum.

We made use of these results in two ways. First, we provided a simulation device for Gaussian random fields. The finite-frequency construction simulates a local Gaussian field as a sum of random Fourier components. The weights of these components are determined by the square-root local spectrum, so the resulting covariance matches the local covariance implied by the deformation. Moreover, a low-rank SVD gives a faster simulator with a controllable compression error. Overall, the numerical study confirmed that the method behaves consistently with the theoretical error decomposition and that the low-rank approximation can provide an accurate compact representation of the local spectral structure.

Second, we showed how the theoretical results can be used as exploratory tools for image-derived deformations. In the cardiac application, optical-flow estimates provided local Jacobians from which we summarized the determinant and eigenstructure of the local deformations. The analysis showed that differences across pathology groups are not captured only by the magnitude of expansion and compression, but also by directional features. 

This study opens several directions for further development. First, there are two sources of error or uncertainty that could be studied more deeply. The first is the approximation error introduced by the local linearization of the deformation. As the numerical results showed, this error is larger for highly nonlinear deformations and for less smooth latent covariance models. In such cases, higher-order local expansions could be considered, retaining curvature information from the deformation to reduce the bias of the local approximation. The second source is the uncertainty associated with an estimated deformation map. In many applications, the deformation and its Jacobian are not known exactly but estimated from data. Future work could propagate this uncertainty through the local covariance, local spectrum, and their derived summaries.

Second, this paper focused on univariate random fields. An extension to multivariate fields would require modeling not only the covariance across spatial locations, but also the cross-covariances among field components. In that setting, components may share a common deformation, have component-specific deformations, or be coupled through a joint deformation structure. The corresponding local theory would have to take into account multivariate dependence as well.

Finally, the framework could be extended to space-time modeling, where dependence changes not only across space but also through temporal evolution. A particularly relevant case is transport, since constant global advection can be viewed as the simplest affine deformation of space over time, in which the field is shifted rigidly. Recent work has made this model more realistic by replacing a single constant velocity with richer velocity structures and by introducing spectral damping to allow temporal decorrelation \citep{battagliola2026}. Extending the proposed local covariance and spectral analysis to such models would connect deformation geometry with nonseparable space-time dependence.

\bibliographystyle{apalike}
\bibliography{bib}

\end{document}